\newtheorem{theorem}{Theorem}
\newtheorem{remark}{Remark}
\newtheorem{proposition}[theorem]{Proposition}
\newenvironment{proof}[1][Proof]{\textbf{#1.} }
     {\    \rule{0.5em}{0.5em}}
\newcommand{\eps}{\varepsilon}
\newcommand{\ad}{{\rm ad}}
\newcommand{\e}{\ensuremath{\mathrm{e}}}
\newcommand{\R}{\mathbb{R}}
\newcommand{\T}{\mathbb{T}}
\newcommand{\ZZ}{\mathbb{Z}}
\newcommand{\NN}{\mathbb{N}}
\newcommand{\LL}{<}
\newcommand{\RR}{>}
\newcommand{\pa}{\partial}
\begin{document}

\title{Continuous changes of variables and the Magnus expansion} 

\author{ Fernando Casas\footnote{Universitat Jaume I, IMAC, Departament de Matem\`atiques, 12071~Castell\'on, Spain. Email: fernando.casas@uji.es}, \and 
Philippe Chartier\footnote{INRIA, Universit\'e de Rennes 1, Campus de Beaulieu, 35042 Rennes, France. Email: 
philippe.chartier@inria.fr}, \and
Ander Murua\footnote{Konputazio Zientziak eta A.A. Saila, Infomatika Fakultatea, UPV/EHU, 20018 
Donostia-San Sebasti\'an, Spain. Email: ander.murua@ehu.es}}

\maketitle

\begin{abstract}

In this paper, we are concerned with a formulation of Magnus and Floquet-Magnus expansions for general nonlinear differential equations. To this aim, we introduce suitable 
continuous variable transformations generated by operators. As an application of the simple formulas so-obtained, we explicitly compute the first terms of the Floquet-Magnus expansion for the Van der Pol oscillator and the nonlinear
Schr\"odinger equation on the torus. 
\end{abstract}

\section{Introduction}

The Magnus expansion constitutes nowadays a standard tool for obtaining both analytic and numerical approximations to the
solutions of non-autonomous linear differential equations. In its simplest formulation, the Magnus expansion \cite{magnus54ote} aims to construct
the solution of the linear differential equation
\begin{equation}  \label{de.1}
   Y'(t) = A(t) Y(t), \qquad Y(0) = I,
\end{equation}
where $A(t)$ is a $n \times n$ matrix, as
\begin{equation}  \label{me.1}
   Y(t) = \exp \Omega(t),
\end{equation}      
where $\Omega$ is an infinite series
\begin{equation}   \label{me.2}
   \Omega(t) = \sum_{k=1}^{\infty} \Omega_k(t), \qquad \mbox{ with } \qquad \Omega_k(0) = 0,
\end{equation}
whose terms are increasingly complex expressions involving iterated integrals of nested commutators of the matrix $A$
evaluated at different times.  

Since the 1960s the Magnus expansion (often with different names)
has been used in many different fields, ranging from nuclear, atomic and molecular physics 
to nuclear magnetic
resonance and quantum electrodynamics, mainly in connection with perturbation theory. 
More recently, it has also been the starting point to construct numerical
integration methods  in the realm of geometric numerical integration
(see \cite{blanes09tme} for a review), when preserving  the main qualitative features of the exact solution, such as its invariant quantities or the geometric structure is at issue
\cite{blanes16aci,hairer06gni}. The convergence
of the expansion is also an important feature and several general results are available \cite{blanes98maf,casas07scf,moan08cot,lakos17cef}. 

Given the favourable properties exhibited by the Magnus expansion in the treatment of the linear problem (\ref{de.1}), it comes as 
no surprise that several generalizations have been proposed along the years. We can mention, in particular, equation (\ref{de.1})
when the (in general complex) matrix-valued function $A(t)$ is periodic with period $T$. In that case, it is possible to 
combine the Magnus expansion with the Floquet theorem \cite{coddington55tod} and construct the solution as
\begin{equation}  \label{flo.1}
  Y(t) = \exp(\Lambda(t)) \, \exp(t F),
\end{equation}  
where $\Lambda(t+T) = \Lambda(t)$ and both $\Lambda(t)$ and $F$ are series expansions
\begin{equation}   \label{flo.2}
   \Lambda(t) = \sum_{k=1}^{\infty} \Lambda_k(t), \qquad F = \sum_{k=1}^{\infty} F_k,
\end{equation}
with $\Lambda_k(0) = 0$ for all $k$. This is the so-called Floquet--Magnus expansion \cite{casas01fte}, and has been widely used in problems of
solid state physics and nuclear magnetic resonance \cite{kuwahara16fmt,mananga16otf}. 
Notice that, due to the periodicity of $\Lambda_k$, the constant
term $F_n$ can be independently obtained as $F_k = \Omega_k(T)/T$ for all $k$.

In the general case of a nonlinear ordinary differential equation in $\mathbb{R}^n$, 
\begin{equation}  \label{fnl.1} 
  x' = g(x,t), \qquad x(0) = x_0 \in \mathbb{R}^n,
\end{equation}
the usual procedure to construct the Magnus expansion requires first to
transform  (\ref{fnl.1}) into a certain linear  equation  involving operators \cite{agrachev79ter}. This
is done 
by introducing the Lie derivative associated with $g$ and the family of linear 
transformations $\Phi_t$ such that $\Phi_t [f] = f \circ \varphi_t$, where $\varphi_t$ denotes the exact flow
defined by (\ref{fnl.1}) and $f$ is any (infinitely) differentiable map $f : \mathbb{R}^n \longrightarrow \mathbb{R}$. The operator
$\Phi_t$ obeys a linear differential equation which is then formally solved with the corresponding Magnus expansion \cite{blanes09tme}.
Once the series is truncated, it corresponds to the Lie derivative of some function $W(x,t)$. 
Finally, the solution at some given time $t=T$ can be approximated by determining the 
1-flow of the autonomous differential equation
\[
  y' = W(y,T), \qquad y(0) = x_0
\]       
since, by construction, $y(1)   \simeq \varphi_T(x_0)$. Clearly, the whole procedure is different and more involved
than in the linear case. It is the purpose of this work to provide a unified framework to derive the Magnus expansion in a simpler
way without requiring the apparatus of chronological calculus. This will be possible by applying the continuous
transformation theory developed by Dewar in perturbation theory in classical mechanics \cite{dewar76rcp}. In that context, the Magnus
series is just the generator of the continuous transformation sending the original system (\ref{fnl.1}) to the trivial one $X' = 0$.
Moreover, the same idea can be applied to the Floquet--Magnus expansion, thus establishing a natural connection with the
stroboscopic averaging formalism. In the process, the relation with pre-Lie algebras and other combinatorial objects will appear
in a natural way. 

The plan of the paper is as follows. We review several procedures to derive the Magnus expansion for the linear equation
(\ref{de.1}) in section \ref{sec.2} and introduce a binary operator that will play an important role in the sequel. In section \ref{sec.3}
we consider continuous changes of variables and their generators in the context of general ordinary differential equations, whereas in sections \ref{sec.4} and \ref{sec.5} we apply this formalism for constructing the Magnus and Floquet--Magnus
expansions, respectively, in the general nonlinear setting. There, we also show how they reproduce the classical expansions
for linear differential equations. As a result, both expansions can be considered as the output of appropriately continuous changes
of variables rendering the original system into a simpler form. {Finally, in section \ref{sec.6} we illustrate the techniques
developed here by considering two examples: the Van der Pol oscillator and the nonlinear Schr\"odinger equation with
periodic boundary conditions.}

\section{The Magnus expansion for linear systems}
\label{sec.2}

There are many ways to get the terms of the Magnus series (\ref{me.2}). If we introduce
a (dummy) parameter $\varepsilon$ in eq. (\ref{de.1}), i.e., we replace $A$ by $\varepsilon A$, then the successive terms in
\begin{equation}  \label{mls1}
  \Omega(t) = \varepsilon \Omega_1(t) + \varepsilon^2 \Omega_2(t) + \varepsilon^3 \Omega_3(t) + \cdots
\end{equation}
can be determined by inserting $\Omega(t)$  into eq. (\ref{de.1}) and computing the derivative of the matrix exponential, thus
arriving at  \cite{iserles00lgm}
\begin{equation}  \label{mls2}
  \varepsilon A = d \exp_{\Omega} (\Omega') \equiv \sum_{k=0}^{\infty} \frac{1}{(k+1)!} \mathrm{ad}_{\Omega}^k (\Omega') =
  \Omega' + \frac{1}{2} [ \Omega, \Omega'] + \frac{1}{3!} [\Omega, [\Omega, \Omega']] + \cdots
\end{equation}
where $[A,B]$ denotes the usual Lie bracket (commutator) and $\mathrm{ad}_A^j B = [A, \mathrm{ad}_A^{j-1} B]$,
$\mathrm{ad}_A^0 B = 0$. At this point
it is useful to introduce the linear operator
\begin{equation}  \label{preL}
  H(t) = (F \rhd G) (t):= \int_0^t [F(u), G(t)] du
\end{equation}
so that, in terms of 
\begin{equation}  \label{erre}
   R(t) = \frac{d }{dt} \Omega(t) = \varepsilon R_1(t) + \eps^2 R_2(t) + \eps^3 R_3(t)+\cdots,
\end{equation}  
equation (\ref{mls2}) can be written as 
\begin{equation}   \label{eq:Req}
 \varepsilon\,  A = R + \frac12 \, R \rhd R
+ \frac{1}{3!} \, R \rhd R \rhd R
+ \frac{1}{4!} \, R \rhd R \rhd R \rhd R+ \cdots.
\end{equation}
Here we have used the notation
\[
  F_1 \rhd F_2 \rhd \cdots \rhd F_m = F_1 \rhd (F_2 \rhd \cdots \rhd F_m).
\]   
Now the successive terms $R_j(t)$ can be determined by substitution of (\ref{erre}) into (\ref{eq:Req}) 
and comparing like powers of $\eps$. In particular, this gives
\begin{eqnarray*}
  R_1 &=& A, \\
  R_2 &=& -\frac{1}{2}\, R_1 \rhd R_1 \ = \ -\frac{1}{2}\, A \rhd A, \\
  R_3 &=&  -\frac{1}{2}\, ( R_2 \rhd R_1 + R_1 \rhd R_2) - \frac{1}{6}\, R_1 \rhd R_1 \rhd R_1 \\
&=&
 \frac{1}{4}\, (A \rhd A) \rhd A + \frac{1}{12}\, A \rhd A \rhd A
\end{eqnarray*}
Of course, equation (\ref{mls2}) can be inverted, thus resulting in
\begin{equation} \label{inv}
  \Omega' = \sum_{k=0}^{\infty}  \frac{B_k}{k!} \ad_{\Omega}^k (\eps A(t)),  \qquad \Omega(0) = 0
\end{equation}
where the $B_j$ are the Bernoulli numbers, that is
\begin{eqnarray*}
  \frac{x}{e^x-1} &=& 1 + B_1 x + \frac{B_2}{2!} x^2 + \frac{B_4}{4!} x^4
+ \frac{B_6}{6!} x^6+ \cdots\\
&=& 1 - \frac12 x + \frac{1}{12} x^2 - \frac{1}{720} x^4 + \cdots  
\end{eqnarray*}
In terms of $R$, equation (\ref{inv}) can be written as
\begin{equation}   \label{eq.12b}
\eps^{-1}\, R = A - B_1 \, R \rhd A 
+ \frac{B_2}{2!} \, R \rhd R \rhd A
+ \frac{B_4}{4!} \, R \rhd R \rhd R \rhd R \rhd A + \cdots.
\end{equation}
Substituting (\ref{erre}) in eq. (\ref{eq.12b}) and working out the resulting expression, one arrives to the following 
recursive procedure allowing to determine
the successive terms $R_j(t)$:
\begin{eqnarray}    \label{recur1}
  &  &  S_m^{(1)} = [\Omega_{m-1}, A], \qquad S_m^{(j)} = \sum_{n=1}^{m-j} [ \Omega_n, S_{m-n}^{(j-1)}], \quad
    2 \le j \le m-1  \nonumber  \\
    & & R_1(t) =A(t), \qquad \qquad
     R_m(t) = \sum_{j=1}^{m-1} \frac{B_j}{j!}  S_m^{(j)}(t), \quad m \ge 2. 
\end{eqnarray}    
Notice in particular that 
\[
   S_2^{(1)} = A \rhd A, \qquad S_3^{(1)} = -\frac{1}{2} (A \rhd A) \rhd A, \qquad S_3^{(2)} = A \rhd A \rhd A.
\]   
At this point it is worth remarking that any of the above procedures can be used to write each $R_j$ in terms of the
binary operation $\rhd$ and the original time-dependent linear operator $A$, which gives in general one term per binary tree, as in~\cite{iserles99ots,iserles00lgm}, or equivalently, one term per planar rooted tree. However,
the binary operator $\rhd$ satisfies, as a consequence of the Jacobi identity of the Lie bracket of vector fields and the integration by parts formula, the so-called pre-Lie relation
\begin{equation}
  \label{eq:preLie}
  F \rhd G \rhd H - (F \rhd G) \rhd H
= G \rhd F \rhd H- (G \rhd F) \rhd H,
\end{equation} 
  As shown in~\cite{ebrahimi09ama}, this relation can be used to rewrite each $R_j$ as a sum of fewer terms, the number of terms being less than or equal to the number of rooted trees with $j$ vertices. For instance, the formula for $R_4$ can be written in the simplified form
\[
R_4 = -\frac{1}{6}\, ((A \rhd A) \rhd A) \rhd A - \frac{1}{12}\,
 A \rhd (A \rhd A) \rhd A
\]
upon using the pre-Lie relation (\ref{eq:preLie}) for $F=G\rhd G$ and $H=G$.

If, on the other hand, one is more interested in getting an explicit expression for $\Omega_j(t)$, 
ithe usual starting point is to express the solution of (\ref{de.1}) as the series
 \begin{equation}   \label{N.1}
    Y(t) = I + \sum_{n=1}^{\infty} \int_{\Delta_n(t)} A(t_1) A(t_2) \cdots A(t_n) \, dt_1 \cdots dt_n,
 \end{equation}
 where 
 \begin{equation}  \label{N.2b}
   \Delta_n(t) = \{ (t_1, \ldots, t_n) : \, 0 \le t_n \le \cdots \le t_1 \le t \}
\end{equation}
and then compute formally the logarithm of (\ref{N.1}). Then one gets
\cite{bialynicki69eso,mielnik70cat,strichartz87tcb,agrachev94tsp}
\[
   \Omega(t)=  \log Y(t) = \sum_{n=1}^{\infty}\Omega_{n}(t),
\]
with
\begin{equation}   \label{me.ite3}
  \Omega_n(t) =   \frac{1}{n} \sum_{\sigma \in S_n} \, (-1)^{d_{\sigma}} \, \frac{1}{ \binom{n-1}{d_{\sigma}}} \, 
   \int_{\Delta_n(t)} \, A(t_{\sigma(1)}) A(t_{\sigma(2)}) \cdots A(t_{\sigma(n)}) \, dt_1 \cdots dt_n.
\end{equation}
Here $\sigma \in S_n$ denotes a permutation of $\{ 1, 2, \ldots, n \}$. An expression in terms only of independent
commutators can be obtained by using the class of bases proposed by  Dragt \& Forest \cite{dragt83con} for the
Lie algebra generated by
the operators $A(t_1), \ldots A(t_n)$, thus resulting in \cite{arnal18agf}
\begin{equation}   \label{me.com3}
\begin{aligned}
  & \Omega_n(t) = \frac{1}{n} \sum_{\sigma \in S_{n-1}} \, (-1)^{d_{\sigma}} \frac{1}{\binom{n-1}{d_{\sigma}}} \, 
    \int_0^t dt_1 \int_0^{t_1} dt_2 \cdots \int_0^{t_{n-1}} dt_n \,  \\
    &  \qquad\qquad\qquad  [A(t_{\sigma(1)}), [A(t_{\sigma(2)}) \cdots 
    [A(t_{\sigma(n-1)}), A(t_n)] \cdots ]],
\end{aligned}    
\end{equation}
where now $\sigma$ is a permutation of $\{1, 2, \ldots, n-1 \}$ and $d_{\sigma}$ corresponds to the number descents of $\sigma$. 
We recall that $\sigma$ has a descent in $i$ if $\sigma(i) > \sigma(i+1)$, $i=1,\ldots,n-2$.

\section{Continuous changes of variables}
\label{sec.3}

Our purpose in the sequel is to generalize the previous expansion 
to general nonlinear differential equations. It turns out that a suitable tool for that purpose is the use of 
continuous variable transformations generated by operators \cite{dewar76rcp,cary81ltp}. We therefore
summarize next its main features.

Given a generic ODE system of the form
\begin{equation}
\label{eq:odex}
\frac{d}{dt} x= f(x,t),
\end{equation}
the idea is to apply some near-to-identity change of variables $x \longmapsto X$ that transforms the original system (\ref{eq:odex}) into
\begin{equation}
\label{eq:odeX}
\frac{d}{dt} X= F(X,t),
\end{equation}
where the vector field $F(X,t)$ adopts some desirable form. 
In order to do that in a convenient way, we apply a one-parameter family of time-dependent transformations of the form
\begin{equation*}
z = \Psi_{s}(X,t), \qquad s \in \R,
\end{equation*}
such that $\Psi_{0}(X,t) \equiv X$, and $x=\Psi_{1}(X,t)$ is the change of variables that we seek. In this way, one continuously varies $s$ from $s=0$ to $s=1$ to move from the trivial change of variables $x=X$ to $x=\Psi_{1}(X,t)$, so that for each solution $X(t)$ of (\ref{eq:odeX}), the function $z(t,s)$ defined by  $z(t,s)=\Psi_{s}(X(t),t)$ satisfies a differential equation
\begin{equation}
\label{eq:odeV}
\frac{\partial}{\partial t} z= V(z,t,s).
\end{equation}
In particular, we will have that $F(X,t)=V(X,t,0)$ and $f(x,t)=V(x,t,1)$. 

Next, the near-to-identity family of maps $X\longmapsto z=\Psi_{s}(X,t)$ is defined 
in terms of  a differential equation in the independent variable $s$, 
\begin{equation}
\label{eq:odeW}
\frac{\partial}{\partial s} z(t,s) = W(z(t,s),t,s)
\end{equation}
 by requiring that $z(t,s)=\Psi_{s}(z(t,0),t)$ for any solution $z(t,s)$ of (\ref{eq:odeW}). The map $\Psi_s(\cdot,t)$ will be near-to-identity if $W(z,t,s)$ is of the form 
 \begin{equation*}
W(z,t,s) = \eps W_1(z,t,s) + \eps^2 W_2(z,t,s) + \cdots,
 \end{equation*}
for  some small parameter $\eps$.

\begin{proposition}[\cite{dewar76rcp}]
\label{prop:1}
Given $F$ and $W=\eps W_1 + \eps^2 W_2 + \cdots$, the right-hand side $V$ of the continuously transformed system (\ref{eq:odeV}) can be uniquely determined (as a formal series in powers of $\eps$) from $V(X,t,0)=F(X,t)$ 
and
 \begin{equation}
 \label{eq:VW}
\frac{\partial}{\partial s} V(x,t,s) -\frac{\partial}{\partial t} W(x,t,s) = W'(x,t,s) V(x,t,s)-V'(x,t,s) W(x,t,s),
\end{equation}
where $W'$ and $V'$ refer to the differentials $\partial_x W$ and $\partial_x V$, respectively.
\end{proposition}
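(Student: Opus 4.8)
The plan is to derive the relation \eqref{eq:VW} by differentiating the defining identity $z(t,s)=\Psi_s(z(t,0),t)$ in two different ways and comparing. Concretely, fix a solution $z(t,s)$ of the system consisting of \eqref{eq:odeV} (in $t$) and \eqref{eq:odeW} (in $s$); such a joint solution exists formally because we will see the two flows are compatible. The key object is the mixed second derivative $\partial_s\partial_t z = \partial_t\partial_s z$, which we evaluate from both sides. From \eqref{eq:odeV} we have $\partial_t z = V(z,t,s)$, hence
\begin{equation*}
\partial_s\partial_t z = V'(z,t,s)\,\partial_s z + \frac{\partial}{\partial s}V(z,t,s) = V'(z,t,s)\,W(z,t,s) + (\partial_s V)(z,t,s),
\end{equation*}
where $V'=\partial_x V$ and the last $\partial_s V$ denotes the partial derivative with respect to the explicit third slot. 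Similarly, from \eqref{eq:odeW} we have $\partial_s z = W(z,t,s)$, hence
\begin{equation*}
\partial_t\partial_s z = W'(z,t,s)\,\partial_t z + \frac{\partial}{\partial t}W(z,t,s) = W'(z,t,s)\,V(z,t,s) + (\partial_t W)(z,t,s).
\end{equation*}
Equating the two expressions and rearranging yields exactly \eqref{eq:VW}, now as an identity holding along trajectories; since through every point $(x,t,s)$ there passes such a trajectory (we may prescribe initial data freely), it holds as an identity in $x,t,s$.

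Next I would address the uniqueness and well-definedness claim: that \eqref{eq:VW} together with the initial condition $V(X,t,0)=F(X,t)$ determines $V$ uniquely as a formal power series in $\eps$. Write $V = F + \eps V_1 + \eps^2 V_2 + \cdots$ and $W = \eps W_1 + \eps^2 W_2 + \cdots$. Substituting into \eqref{eq:VW} and collecting the coefficient of $\eps^n$, the left-hand side contributes $\partial_s V_n - \partial_t W_n$, while the right-hand side contributes $W'_{?}V_{?} - V'_{?}W_{?}$ terms in which every $V$-factor carries an index strictly less than $n$ (because each such term is multiplied by at least one $W_j$ with $j\ge 1$). Therefore, integrating in $s$ from $0$ to $s$ and using $V_n(X,t,0)=0$ for $n\ge 1$,
\begin{equation*}
V_n(x,t,s) = \partial_t\!\left(\int_0^s W_n(x,t,\sigma)\,d\sigma\right) + \int_0^s \Big(\text{known terms in } V_1,\dots,V_{n-1}, W_1,\dots,W_n\Big)(x,t,\sigma)\,d\sigma,
\end{equation*}
which expresses $V_n$ recursively and uniquely in terms of lower-order data. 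This both proves uniqueness and, read as a definition, constructs $V$; one then checks by the same bookkeeping that the $V$ so constructed indeed satisfies \eqref{eq:odeV} for $z(t,s)=\Psi_s(X(t),t)$, i.e. that $V(X,t,1)=f(X,t)$ follows from the transformation being a genuine change of variables — but strictly speaking the proposition only asserts the determination of $V$ from $F$ and $W$, so the recursion above is the heart of the matter.

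The main obstacle, and the point to be careful about, is the legitimacy of the interchange of partial derivatives $\partial_s\partial_t z = \partial_t\partial_s z$ and, more fundamentally, the existence of a function $z(t,s)$ solving \eqref{eq:odeV} and \eqref{eq:odeW} simultaneously. In the smooth category this would require an integrability (Frobenius-type) condition; here, however, everything is to be understood as formal power series in $\eps$, and at each order in $\eps$ the equations for $z$ reduce to linear ODEs with smooth coefficients, so no genuine analytic difficulty arises — the mixed-derivative identity is valid order by order. I would therefore state at the outset that all manipulations are formal in $\eps$, which sidesteps the convergence and integrability issues, and then the computation is essentially the short two-line calculation above together with the order-by-order recursion establishing uniqueness. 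A secondary bookkeeping point deserving a careful sentence is the distinction, throughout, between the total derivatives $\partial_t$, $\partial_s$ acting along the trajectory $z(t,s)$ and the partial derivatives of $V$, $W$ with respect to their explicit arguments; keeping this notation straight (as the statement already does by reserving $V'$, $W'$ for $\partial_x$) is what makes the identity come out in the stated form.
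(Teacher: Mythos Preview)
Your argument is correct and essentially identical to the paper's: derive \eqref{eq:VW} by cross-differentiating \eqref{eq:odeV} and \eqref{eq:odeW} along $z(t,s)=\Psi_s(X(t),t)$, lift it to an identity in $(x,t,s)$ via surjectivity/invertibility of $x_0\mapsto\Psi_s(x_0,t)$, and then integrate in $s$ and match powers of $\eps$ to obtain the recursion for $V_n$. The only cosmetic difference is that the paper packages the integrated form as $V=F+\int_0^s S\,d\sigma$ with $S=\partial_t W+W'V-V'W$ before expanding in $\eps$; your worry about Frobenius-type integrability is unnecessary here, since $z(t,s)$ is \emph{defined} as $\Psi_s(X(t),t)$ and $V$ is introduced precisely so that $\partial_t z=V(z,t,s)$ holds, so there is no compatibility condition to verify.
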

\begin{proof}
By partial differentiation of both sides in (\ref{eq:odeW}) with respect to $t$ and partial differentiation of both sides in (\ref{eq:odeV}) with respect to $s$, we conclude that (\ref{eq:VW}) holds for all $x=z(s,t)=\Psi_s(x_0,t)$ with arbitrary $x_0$ and all $(t,s)$. 
One can show that the equality (\ref{eq:VW}) holds for arbitrary $(x,t,s)$ by taking into account that, for given $t$ and $s$,   $x_0 \mapsto x=\Psi_s(x_0,t)$ is one-to-one.

Now,  since $V(x,t,0)=F(x,t)$, we have that 
\begin{equation}
\label{eq:VFS}
V(x,t,s) = F(x,t) + \int_{0}^{s} S(x,t,\sigma) \, d\sigma,
\end{equation}
where $S = \frac{\partial}{\partial t} W + W' V -V' W$. Clearly, the successive terms of 
$$V=F + \eps V_1 + \eps^2 V_2 + \cdots$$
 are uniquely determined by equating like powers of $\varepsilon$ in (\ref{eq:VFS}).
\end{proof}
%

In the sequel we always assume that  the generator $W$ of the change of variables 
\begin{itemize}
\item[(i)] does not depend on $s$, and
\item[(ii)] $W(x,0,s) \equiv 0$, so that $\Psi_s(x,0)=x$ and   $x(0)=X(0)$.
\end{itemize}
 The  successive terms in the expansion of $V(x,t,s)$ in Proposition~\ref{prop:1} can be conveniently computed with the help of a binary operation
$\rhd$ on  maps $\R^{d+1} \longrightarrow \R^d$ defined as follows. Given two such maps $P$ and $Q$, then 
$P \rhd Q$ is a new map whose evaluation at $(x,t) \in \R^{d+1}$ takes the value
\begin{equation}
\label{eq:rhd}
(P \rhd  Q)(x,t) = \int_{0}^{t} ( P'(x,\tau) Q(x,t)-  Q'(x,t) P(x,\tau)) d\tau.
\end{equation}
Under these conditions, from Proposition~\ref{prop:1}, we have that 
\begin{equation}  \label{eq:rhd2}
\frac{\partial}{\partial s} V(x,t,s) -\frac{\partial}{\partial t} W(x,t) = \big[ W(x,t), V(x,t,s) \big]
\end{equation}
with the notation
\begin{equation}   \label{bracket}
 \Big[ W(x,t), V(x,t,s) \Big] := W'(x,t) V(x,t,s)-V'(x,t,s) W(x,t)
\end{equation}
for the Lie bracket.

Equation (\ref{eq:rhd2}), in terms of 
\begin{equation}
\label{eq:R}
R(x,t):= \frac{\partial}{\partial t} W(x,t),
\end{equation}
reads
\[
\frac{\partial}{\partial s} V(x,t,s) =  R(x,t) + \int_0^t \left( R'(x,\tau) V(x,t,s)-V'(x,t,s) R(x,\tau) \right) d \tau
\]
or equivalently
\begin{equation*}
V_s = V_0 + s R + R \rhd \left(\int_{0}^{s} V_{\sigma} d \sigma\right),
\end{equation*}
where we have used the notation $V_s(x,t):=V(x,t,s)$.  Since $V(X,t,0) = F(X,t)$, then
\begin{align}
  \label{eq:V}
 V(\cdot,\cdot,s) &= s\, R + \frac{s^2}{2} \, R \rhd R 
+ \frac{s^3}{3!} \, R \rhd R \rhd R + \cdots \\
& \nonumber 
+ F + s\, R \rhd F + \frac{s^2}{2}  \, R  \rhd R \rhd F
+ \frac{s^3}{3!} \, R \rhd R \rhd R \rhd F
 + \cdots
\end{align}
with the convention $F_1 \rhd F_2 \rhd \cdots \rhd F_m = F_1 \rhd (F_2 \rhd \cdots \rhd F_m)$.

We thus have the following result:
\begin{proposition}
\label{prop:2}
A  change of variables $x=\Psi_1(X,t)$ defined  in terms of  a continuous change of variables $X \longmapsto z = \Psi_s(X,t)$ with generator
\begin{equation}
\label{eq:W}
W(x,t)=\eps \, W_1(x,t) + \eps^2 \, W_2(x,t) + \cdots
\end{equation}
and $W(x,0) \equiv x$, transforms the system of equations (\ref{eq:odex}) into (\ref{eq:odeX}), where $f$ and $F$ are related by
\begin{align}
  \label{eq:FRf}
f &=  R + \frac{1}{2} \, R \rhd R 
+ \frac{1}{3!} \, R \rhd R \rhd R + \frac{1}{4!} \, R \rhd R \rhd R \rhd R+ \cdots \\
& \nonumber 
+ F + R \rhd F + \frac{1}{2}  \, R  \rhd R \rhd F
+ \frac{1}{3!} \, R \rhd R \rhd R \rhd F
 + \cdots
\end{align}
and $R$ is given by (\ref{eq:R}).
\end{proposition}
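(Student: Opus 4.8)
The plan is to derive \eqref{eq:FRf} as the specialisation of \eqref{eq:V} at $s=1$, using the two standing assumptions on $W$. First I would invoke Proposition~\ref{prop:1}: with $W$ independent of $s$, equation \eqref{eq:VW} collapses to \eqref{eq:rhd2}, i.e.\ $\partial_s V(x,t,s) = R(x,t) + \bigl[W(x,t),V(x,t,s)\bigr]$ where $R=\partial_t W$. Here I would spell out why the assumption $W(x,0)\equiv 0$ matters: it guarantees that integrating the bracket term in $t$ from $0$ to $t$ reproduces exactly the pre-Lie operation $\rhd$ of \eqref{eq:rhd}, since $W(x,t)=\int_0^t R(x,\tau)\,d\tau$, so that $\bigl[W(x,t),V(x,t,s)\bigr] = \int_0^t\bigl(R'(x,\tau)V(x,t,s)-V'(x,t,s)R(x,\tau)\bigr)d\tau = (R\rhd V_s)(x,t)$. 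This turns the $s$-evolution equation into the integral fixed-point form $V_s = F + sR + R\rhd\bigl(\int_0^s V_\sigma\,d\sigma\bigr)$ already displayed before the statement.

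Next I would solve this fixed-point equation by Picard iteration in $s$, treating it as an identity of formal power series. Writing $V_s = F + sR + \text{(higher iterates)}$ and substituting repeatedly, the term $sR$ feeds back through $R\rhd(\cdot)$ to generate $\frac{s^2}{2}R\rhd R$, then $\frac{s^3}{3!}R\rhd R\rhd R$, and so on, while the constant term $F$ generates $sR\rhd F$, $\frac{s^2}{2}R\rhd R\rhd F$, etc.; the factorials arise because each additional $\rhd R$ brings one more integration $\int_0^s\sigma^{k}\,d\sigma = s^{k+1}/(k+1)$. This yields \eqref{eq:V}. I would note that convergence of the iteration as a formal series in $\eps$ is automatic: since $W=\mathcal O(\eps)$ we have $R=\mathcal O(\eps)$, so each application of $R\rhd(\cdot)$ raises the $\eps$-order by at least one, and only finitely many terms contribute to each power of $\eps$ — this is exactly the uniqueness statement of Proposition~\ref{prop:1}.

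Finally I would set $s=1$. By construction (the discussion around \eqref{eq:odeV}--\eqref{eq:odeW}) we have $f(x,t)=V(x,t,1)$ and $F(X,t)=V(X,t,0)$, so evaluating \eqref{eq:V} at $s=1$ gives precisely \eqref{eq:FRf}. The assumption $W(x,0)\equiv 0$ also delivers the side remark that $\Psi_s(x,0)=x$ and hence $x(0)=X(0)$, consistent with $R(x,0)=\partial_t W(x,0)$ contributing nothing at $t=0$.

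The main obstacle, and the only genuinely delicate point, is the passage from the pointwise-in-$x$ validity of \eqref{eq:VW} — which the proof of Proposition~\ref{prop:1} only establishes along orbits $x=\Psi_s(x_0,t)$ — to its validity for arbitrary $(x,t,s)$; this is handled exactly as in that proof, by invertibility of $x_0\mapsto\Psi_s(x_0,t)$ for fixed $(t,s)$. Everything else is bookkeeping: matching the combinatorial factors $1/k!$ produced by iterated $s$-integration and checking that the two families of terms (those with a trailing $\rhd F$ and those without) do not interfere. I would therefore keep the write-up short, emphasising the reduction to \eqref{eq:rhd2}, the role of $W(x,0)\equiv 0$ in identifying the bracket with $\rhd$, the Picard iteration, and the evaluation at $s=1$.
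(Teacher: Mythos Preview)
Your proposal is correct and follows essentially the same route as the paper: the paper presents Proposition~\ref{prop:2} as the immediate consequence of the derivation preceding it, namely reducing \eqref{eq:VW} to \eqref{eq:rhd2} under the assumption that $W$ is $s$-independent, rewriting the bracket as $R\rhd V_s$ using $W(x,t)=\int_0^t R(x,\tau)\,d\tau$, recasting this as the integral equation $V_s=V_0+sR+R\rhd\bigl(\int_0^s V_\sigma\,d\sigma\bigr)$, iterating to obtain \eqref{eq:V}, and setting $s=1$. Your write-up makes explicit a couple of points the paper leaves to the reader (the Picard-iteration combinatorics behind the $1/k!$ and the formal-series convergence in $\eps$), but the argument is the same.
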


Proposition~\ref{prop:2} deals with changes of variables such that $X = \Psi_1(X,0)$ (as a consequence of $W(X,0) \equiv X$), so that the initial value problem obtained by supplementing (\ref{eq:odex}) with the initial condition $x(0) = x_0$ is transformed into (\ref{eq:odeX}) supplemented with $X(0)=x_0$. 

More generally, one may consider generators $W(\cdot,t)$ within some class $\mathcal{C}$ of time-dependent smooth vector fields such
that the operator $\partial_t: \mathcal{C} \to \mathcal{C}$ is invertible.  
Next result reduces to Proposition~\ref{prop:2}, when one considers some class $\mathcal{C}$ of generators $W(\cdot,t)$ such that $W(x,0) \equiv 0$, so that $\partial_t: \mathcal{C} \to \mathcal{C}$ is invertible, with inverse defined as $\partial_t^{-1} W (x,t) = \int_0^t W(x,\tau)\, d\tau$.

\begin{proposition}
\label{prop:3}
A  change of variables $x=\Psi_1(X,t)$ defined  in terms of  a continuous change of variables $X \longmapsto z = \Psi_s(X,t)$ with generator
\begin{equation}
\label{eq:W1}
W(x,t)=\eps \, W_1(x,t) + \eps^2 \, W_2(x,t) + \cdots
\end{equation}
within some class $\mathcal{C}$ of time dependent smooth vector fields with invertible $\partial_t:\mathcal{C} \to \mathcal{C}$
transforms the initial value problem
\begin{equation}
\label{eq:ivpx}
\frac{d}{dt} x= f(x,t), \quad x(0) = x_0
\end{equation}
into 
\begin{equation}
\label{eq:ivpX}
\frac{d}{dt} X= F(X,t), \quad X(0) = \Psi_1^{-1}(x_0),
\end{equation}
where $f$, $F$, and $R = \partial_t W$ are related by (\ref{eq:FRf}), and the binary operator $\rhd: \mathcal{C} \times \mathcal{C} \to \mathcal{C}$ is defined as 
 \begin{equation}
\label{eq:rhd2b}
P \rhd  Q =  ( \partial_t^{-1}  P')Q  -  Q' ( \partial_t^{-1}  P)  =   [ \partial_t^{-1}  P,  \, Q].
\end{equation}
\end{proposition}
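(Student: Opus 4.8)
The plan is to reduce Proposition~\ref{prop:3} to Proposition~\ref{prop:2} by a change of reference point, exploiting the fact that the entire analysis of Section~\ref{sec.3} only used the operator $\partial_t$ through its inverse $\partial_t^{-1}W(x,t)=\int_0^t W(x,\tau)\,d\tau$ and the integration-by-parts identity this inverse satisfies. First I would observe that the derivation leading to (\ref{eq:V}) and hence to (\ref{eq:FRf}) is purely formal: starting from (\ref{eq:rhd2}), writing $R=\partial_t W$, and integrating the $s$-equation, one obtains $V_s = F + sR + R\rhd(\int_0^s V_\sigma\,d\sigma)$, whose iterative solution is exactly (\ref{eq:V}); setting $s=1$ gives (\ref{eq:FRf}). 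In that derivation the only structural facts used about $\rhd$ are that it is bilinear and that $P\rhd Q = (\partial_t^{-1}P')Q - Q'(\partial_t^{-1}P) = [\partial_t^{-1}P,Q]$, together with $\partial_t\circ\partial_t^{-1}=\mathrm{id}$ on $\mathcal{C}$. None of this requires $\partial_t^{-1}$ to be the specific integral operator with lower limit $0$.

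Accordingly, the key steps are: (1) Re-derive (\ref{eq:rhd2}) in the present setting. The identity (\ref{eq:VW}) of Proposition~\ref{prop:1} holds for any $W$ and $V$ coming from the continuous change of variables, independently of any normalization of $W$ at $t=0$; since we now assume $W$ is $s$-independent, (\ref{eq:VW}) becomes $\partial_s V - \partial_t W = [W,V]$, which is (\ref{eq:rhd2}). (2) Apply $\partial_t^{-1}:\mathcal{C}\to\mathcal{C}$ to $\partial_s V - R = [W,V] = [\partial_t^{-1}R, V]$, using $\partial_t W = R$ i.e. $W=\partial_t^{-1}R$ (this is where invertibility of $\partial_t$ on $\mathcal{C}$ is essential — it is what makes $W$ recoverable from $R$ and legitimizes writing everything in terms of $R$). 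One has to be slightly careful here: $\partial_s V$ need not lie in $\mathcal{C}$, but $\partial_s(V-F)=\partial_s V$ does, since $V-F = \int_0^s S\,d\sigma$ with $S=\partial_t W + [W,V] = \partial_t(W + [\partial_t^{-1}W, V]/\!\ldots)$ lying in the range of $\partial_t$; cleanest is to integrate the $s$-equation directly: $V_s = F + sR + \partial_t^{-1}\!\big(R'\int_0^s V_\sigma d\sigma\big) - \big(\int_0^s V_\sigma d\sigma\big)'\partial_t^{-1}R = F + sR + R\rhd\int_0^s V_\sigma\,d\sigma$, using the definition (\ref{eq:rhd2b}) of $\rhd$ on $\mathcal{C}$. (3) Solve this fixed-point equation iteratively in powers of $\eps$ (noting $R = \partial_t W = \eps\,\partial_t W_1 + \cdots = O(\eps)$, $F = V_0$), obtaining term by term the series (\ref{eq:V}); at $s=1$ this is (\ref{eq:FRf}). (4) Finally, identify the transformed initial condition: $z(t,s)=\Psi_s(X(t),t)$ with $x(t)=\Psi_1(X(t),t)$ gives $x_0 = x(0) = \Psi_1(X(0),0)$, hence $X(0)=\Psi_1(\cdot,0)^{-1}(x_0)=\Psi_1^{-1}(x_0)$ in the notation of the statement; and $X$ solves (\ref{eq:odeX}) with $F=V(\cdot,\cdot,0)$ by construction, which is (\ref{eq:ivpX}).

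The main obstacle I anticipate is purely bookkeeping rather than conceptual: one must check that every intermediate quantity stays inside the class $\mathcal{C}$ (or in the domain/range of $\partial_t$) so that $\partial_t^{-1}$ may be applied and the identity $P\rhd Q = [\partial_t^{-1}P, Q]$ is meaningful — in particular that $R=\partial_t W\in\mathcal{C}$ and that $\int_0^s V_\sigma\,d\sigma$ differs from something in $\mathcal{C}$ only by the fixed element $F$, whose $x$-derivative couples correctly. Once the fixed-point equation $V_s = F + sR + R\rhd\int_0^s V_\sigma\,d\sigma$ is established with $\rhd$ interpreted via (\ref{eq:rhd2b}), the remainder of the proof is identical, symbol for symbol, to the passage from (\ref{eq:rhd2}) to Proposition~\ref{prop:2}, and the verification that the normalization $W(x,0)\equiv 0$ indeed makes $\partial_t^{-1}$ the operator $\int_0^t\cdot\,d\tau$ (so that Proposition~\ref{prop:2} is the special case) is immediate.

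\begin{proof}
Since $W$ is assumed independent of $s$, identity (\ref{eq:VW}) of Proposition~\ref{prop:1} reads $\partial_s V - \partial_t W = [W,V]$, i.e. (\ref{eq:rhd2}) with $R := \partial_t W$; by invertibility of $\partial_t$ on $\mathcal{C}$ we have $W = \partial_t^{-1}R$, so the right-hand side is $[\partial_t^{-1}R, V]$. Using $V(\cdot,\cdot,0)=F$ and integrating in $s$,
\begin{equation*}
V_s = F + sR + \big(\partial_t^{-1}R'\big)\!\int_0^s V_\sigma\,d\sigma - \Big(\!\int_0^s V_\sigma\,d\sigma\Big)'\partial_t^{-1}R = F + sR + R\rhd\!\int_0^s V_\sigma\,d\sigma,
\end{equation*}
where $\rhd$ is the operator (\ref{eq:rhd2b}) on $\mathcal{C}\times\mathcal{C}$; here one uses that $R\in\mathcal{C}$ and that $\int_0^s V_\sigma\,d\sigma$ enters only through its $x$-differential and through the bracket with the $\mathcal{C}$-element $\partial_t^{-1}R$, so all terms are well defined. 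Since $R = O(\eps)$, iterating this fixed-point equation and comparing powers of $\eps$ yields, exactly as in the derivation of (\ref{eq:V}),
\begin{equation*}
V_s = sR + \tfrac{s^2}{2}R\rhd R + \cdots + F + sR\rhd F + \tfrac{s^2}{2}R\rhd R\rhd F + \cdots,
\end{equation*}
and setting $s=1$ gives (\ref{eq:FRf}). Finally, for a solution $X(t)$ of (\ref{eq:odeX}) the function $z(t,s)=\Psi_s(X(t),t)$ satisfies (\ref{eq:odeV}) with $V(\cdot,\cdot,0)=F$ and $V(\cdot,\cdot,1)=f$, so $x(t):=\Psi_1(X(t),t)$ solves (\ref{eq:ivpx}); evaluating at $t=0$ gives $x_0 = \Psi_1(X(0),0)$, hence $X(0)=\Psi_1^{-1}(x_0)$, which is (\ref{eq:ivpX}). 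When $\mathcal{C}$ consists of vector fields with $W(x,0)\equiv 0$, $\partial_t^{-1}$ is $\int_0^t\cdot\,d\tau$, (\ref{eq:rhd2b}) reduces to (\ref{eq:rhd}), and one recovers Proposition~\ref{prop:2}.
\end{proof}
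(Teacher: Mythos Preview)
Your proof is correct and follows exactly the route the paper intends: the paper does not give a separate proof of Proposition~\ref{prop:3}, merely remarking that it ``reduces to Proposition~\ref{prop:2}'' when $\partial_t^{-1}$ is the integral from $0$, so the implicit argument is precisely the one you have written out---repeat the derivation (\ref{eq:rhd2})--(\ref{eq:V}) with the abstract inverse $\partial_t^{-1}$ in place of $\int_0^t$, and read off the modified initial condition $X(0)=\Psi_1^{-1}(x_0)$ since the normalization $W(x,0)\equiv 0$ is no longer imposed. Your handling of the domain issue (that $\rhd$ in (\ref{eq:rhd2b}) only needs $\partial_t^{-1}$ applied to its \emph{left} argument, so $R\rhd\int_0^s V_\sigma\,d\sigma$ is well defined even if the right argument is not in $\mathcal{C}$) is the one technical point the paper glosses over, and you treat it adequately.
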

Notice that the operation $\rhd$ of (\ref{eq:rhd2b}) satisfies the pre-Lie relation (\ref{eq:preLie}), and that this
 proposition applies, in particular, to the class
$\mathcal{C}$ of smooth $(2\pi)$-periodic vector fields in $\mathbb{R}^d$ with vanishing average. In that case the operator $\partial_t$ is invertible, with inverse given by
\begin{equation*}
\partial_t^{-1} W(x,t) = \sum_{k \in \ZZ \atop k \ne 0} \frac{1}{ i\, k}\, \e^{i\,  k\,  t} \, \hat W_k(x), \quad \mbox{if}
 \quad W(x,t) = \sum_{k \in \ZZ \atop k \ne 0} \e^{i\,  k\,  t} \, \hat W_k(x).
 \end{equation*}

\section{Continuous transformations and the Magnus expansion}
\label{sec.4}

Consider now an initial value problem of the form
\begin{equation}
\label{eq:odeg}
\frac{d}{dt} x= \eps \, g(x,t), \quad x(0)=x_0,
\end{equation}
where the parameter $\varepsilon$ has been introduced for convenience. As stated in the introduction,
the solution $x(t)$ of this problem (\ref{eq:odex}) can be approximated at a given $t$  as the solution $y(s)$ at $s=1$ of the autonomous initial value problem
\begin{equation*}
\frac{d}{ds} y= \eps \, W_1(y,t):= \eps \int_{0}^{t} g(z,\tau) \, d\tau, \quad y(0)=x_0.
\end{equation*}
This is nothing but the first term in the Magnus approximation of $x(t)$. As a matter of fact, the Magnus expansion is a formal series (\ref{eq:W})
such that, for each fixed value of $t$,  formally $x(t) = y(1)$, where $y(s)$ is the solution of 
\begin{equation*}
\frac{d}{ds} y= W(y,t), \quad y(0)=x_0.
\end{equation*}
The Magnus expansion (\ref{eq:W}) can then be obtained by 
applying a change of variables $x=\Psi_1(X,t)$, defined in terms of  a continuous transformation 
$X \longmapsto z = \Psi_s(X,t)$ with generator $W=W(x,t)$ independent of $s$, that transforms (\ref{eq:odeg}) into 
\begin{equation*}
\frac{d}{dt} X = 0.
\end{equation*}
This can be achieved by 
applying Proposition~\ref{prop:2}  with $F(X,t)\equiv 0$ and $f(x,t)=\eps g(x,t)$, i.e., solving 
\begin{equation}
  \label{eq:M1}
\eps \, g =  R + \frac{1}{2} \, R \rhd R 
+ \frac{1}{3!} \, R \rhd R \rhd R + \frac{1}{4!} \, R \rhd R \rhd R \rhd R+ \cdots 
\end{equation}
for
\begin{equation*}
R(x,t) = \eps\, R_1(x,t) + \eps^2\, R_2(x,t) + \eps^3\, R_3(x,t) + \cdots
\end{equation*}
and determining the generator $W$ as
\begin{equation}
\label{eq:WR}
W(x,t) = \int_0^t R(x,\tau)\, d\tau.
\end{equation}

At this point it is worth analyzing how the usual Magnus expansion for linear systems developed in section \ref{sec.2}
is reproduced with this formalism. To do that, we introduce operators $\Omega(t)$ and $B_s(t)$ such that 
\[
    W(x,t):= \Omega(t) x, \quad V(x,t,s):= B_s(t) x, \qquad \frac{\partial }{\partial t} W(x,t) := R(t) x.
\]
Now equation (\ref{eq:rhd2}) reads
\[
 \left( \frac{\partial }{\partial s} B_s \right) x -   R x = \Omega B_s x - B_s \Omega x
\]
or equivalently  
\begin{equation*}
B_s = B_0 + s R + R \rhd \left(\int_{0}^{s} B_{\sigma} d \sigma\right),
\end{equation*}
where the binary operation $\rhd$ defined in (\ref{eq:rhd}) reproduces (\ref{preL}). Since $B_1(t) = \varepsilon \, A(t)$ and $B_0 = 0$, then (\ref{eq:M1})
is precisely (\ref{eq:Req}). The continuous change of variables is then given by
\[
   X \longmapsto z = \Psi_s(X,t) = \exp \big( s \Omega(t) \big) X
\]
so that 
\[
   x(t) = \Psi_1(X,t) = \e^{\Omega(t)} X(t) = \e^{\Omega(t)} X(0) = \e^{\Omega(t)} x(0)
\]
reproduces the Magnus expansion in the linear case.
In consequence, the expression for each term $W_j(x,t)$ in the Magnus series for the ODE
(\ref{eq:odeg}) can be obtained from the corresponding formula for the linear case with the binary operation (\ref{eq:rhd}) and 
all results collected in section \ref{sec.2} are still valid in the general setting by replacing the commutator by the Lie bracket
(\ref{bracket}).

\section{Continuous transformations and the Floquet--Magnus expansion}
\label{sec.5}

The procedure of section \ref{sec.3} can be extended  when there is a periodic time dependence in the differential equation. In that
case one gets a generalized Floquet--Magnus expansion with agrees with the usual one when the problem is linear. 

As usual, the starting point is the initial value problem
\begin{equation}
\label{eq:odeT}
\frac{d}{dt} x= \eps \, g(x,t), \quad x(0)=x_0,
\end{equation}
where now $g(x,t)$ depends periodically on $t$, with period $T$. As before, we 
apply  a change of variables $x=\Psi_1(X,t)$, defined in terms of  a continuous transformation $X \longmapsto z = \Psi_s(X,t)$ with generator $W=W(x,t)$ that removes the time dependence, i.e.,
that transforms (\ref{eq:odeT}) into 
\begin{equation}
\label{eq:G}
\begin{split}
\frac{d}{dt} X &= \eps\, G(X; \eps) =  \eps\, G_1(X) + \eps^2\, G_2(X) + \eps^3\, G_3(X) + \cdots, \\
X(0) &=x_0.
\end{split}
\end{equation}
{
In addition, the generator $W$ is chosen to be independent of $s$ and periodic in $t$ with the same period $T$.
}

This can be achieved by 
considering Proposition~\ref{prop:2}  with $F(X,t):= \eps\, G(X;\eps)$ and $f(x,t):=\eps \, g(x,t)$,  solving (\ref{eq:FRf}) for the series
\begin{equation*}
R(x,t) = \eps\, R_1(x,t) + \eps^2\, R_2(x,t) + \eps^3\, R_3(x,t) + \cdots
\end{equation*}
Thus, for the first terms one gets 
\begin{align*}
  R_1  & = g  - G_1 \\
  R_2  & = - \frac{1}{2} R_1 \rhd R_1 - R_1 \rhd G_1  - G_2\\
   R_3  & = - \frac{1}{2} (R_1 \rhd R_2 + R_2 \rhd R_1) + \frac{1}{3!} R_1 \rhd R_1 \rhd R_1 \\
     & \quad  -  R_1 \rhd G_2  - R_2 \rhd G_1 - \frac{1}{2} R_1 \rhd R_1 \rhd G_1 - G_3
\end{align*}   
and, in general,
\begin{equation}  \label{eq.recu}
    R_j = U_j - G_j,  \qquad j =1,2,\ldots,
\end{equation}
where $U_j$ only contains terms involving $g$ or the vector fields $U_m$ and $G_m$ of a lower order, i.e., with $m < j$. This
allows one to solve (\ref{eq.recu}) recursively by taking {the average of $U_j$ over one period $T$, i.e.,}   
\[
   G_j(X) = \left\LL U_j(X,\cdot)  \right\RR \equiv  \frac{1}{T} \int_0^T U_j(X,t) dt,
\]
thus ensuring that $R_j$ is periodic. Finally, once $G$ and $R$ are determined, $W$ is obtained from (\ref{eq:WR}),
which in turn determines the change of variables.    

If we limit ourselves to the linear case, $g(x,t) = A(t) x$, with $A(t +T) = A(t)$, then, by introducing the operators
\[
    W(x,t):= \Lambda(t) x, \quad V(x,t,s):= B_s(t) x, \quad G(X):= F X,
\]
the relevant equation is now
\begin{equation*}
B_s = F + s \Lambda' + \Lambda' \rhd \left(\int_{0}^{s} B_{\sigma} d \sigma\right),
\end{equation*}
which, with the additional constraint $B_1(t) = A(t)$, leads to
\begin{align*}
  \Lambda_1'  & = A  - F_1 \\
  \Lambda_2'  & = - \frac{1}{2} \Lambda_1'  \rhd \Lambda_1' - \Lambda_1' \rhd F_1  - F_2
 \end{align*}   
and so on, i.e., exactly the same expressions obtained in \cite{casas01fte}. The transformation is now
\[
   x(t) = \Psi_1(X,t) = \e^{\Lambda(t)} X(t) = \e^{\Lambda(t)} \e^{t F} x(0) 
\]
thus reproducing the Floquet--Magnus expansion in the periodic case \cite{casas01fte}.

Several remarks are in order at this point:
\begin{enumerate}
 \item This procedure has close similarities with several averaging techniques. As a matter of fact, in the quasi-periodic case, it is
equivalent to the high order quasi-stroboscopic averaging treatment carried out in \cite{chartier12afs}.
\item  A different style of high order averaging (that can be more convenient in some practical applications) can be performed by dropping the condition that $W(x,0)\equiv x$, and requiring instead, for instance, that $W(x,t)$ has vanishing average in $t$.
 In that case, the initial condition in (\ref{eq:G}) must be replaced by $X(0) = \Psi_1^{-1}(x_0)$. 
  The generator $W(x,t)$ of the change of variables and the averaged vector fields $\eps\, G(x,t)$ can be similarly computed by considering Proposition~\ref{prop:3} with the class $\mathcal{C}$ of smooth quasi-periodic vector fields (on $\R^d$ or on some smooth manifold) with vanishing average. 
 \end{enumerate}

\section{Examples of application}
\label{sec.6}

{
The nonlinear Magnus expansion has been applied in the context of control theory, namely in non-holonomic motion  planning.
The considered systems can be described by equations
\[
  q'(t) = A(t)(q) = \sum_{i=1}^m g_i(q) u_i,
\]
where $\dim q > \dim u$, $g_i$ are vector fields and the $u_i$ are the controls. The 
(nonlinear) Magnus expansion allows one to express, locally around a given point in the state space, admissible directions of
motions in terms of control parameters, so that the motion in a desired direction can be reformulated as the optimization of
those control parameters that steer the non-holonomic system into the desired direction 
\cite{strichartz87tcb,duleba97lom,duleba06pcf}. In this sense, expression (\ref{me.com3})
and the corresponding one for the generic, nonlinear case, could be very useful in applications, since it only contains  
independent terms \cite{duleba98oac,duleba06pcf}.
}

As mentioned previously, the general Floquet--Magnus can also be applied in averaging. 
A large class of problems where averaging techniques are successfully applied is made of autonomous systems of the form 
\begin{eqnarray} \label{eq:autper}
\dot u = A u + \eps h(u), \quad u(0)=x_0,
\end{eqnarray}
where $h$ is a smooth function from $ \R^n$ to itself (or more generally from a functional Banach space $E$ space to itself) and  $A$ is a skew-symmetric matrix ${\cal M}(\R^n)$ (or more generally a linear operator on $E$)  whose spectrum is included in $\frac{2i \pi}{T} \ZZ$. Denoting  $x = e^{-t A} u$ and differentiating leads to 
\begin{eqnarray*}
\dot x = -A e^{-tA} u+ e^{-tA} \dot u
		= \eps e^{-tA} h\left( e^{tA} x \right)
\end{eqnarray*}
so that $x$ now satisfies an equation of the very form (\ref{eq:odeT}) with 
$$
g(x,t) = e^{-tA} h\left( e^{tA} x \right).
$$ 
The $T$-periodicity of $g$ with respect to time stems from the fact that $\mbox{Sp}(A) \subset \frac{2 i \pi}{T} \ZZ$. For this specific case,  relation (\ref{eq.recu}) leads to the following expressions
\begin{align}
  G_1(X) =& \left\LL g(X,\cdot)  \right\RR, \\
  G_2(X)  =&  -\frac{1}{2} \left\LL \int_0^{t} \big[ g(X,\tau), g(X,t) \big] \, d\tau
  \right\RR, \label{eq:G2} \\
  G_3(X)  =& \frac{1}{12} \left\LL \int_0^{t} \left[g(X,\tau),\int_0^{t} \left[g(X,\sigma),g(X,t)\right]\, d\sigma \right]\, d\tau   \right\RR \nonumber \\
  & + \frac{1}{4} \left\LL \int_0^{t} \left[\int_0^{\tau} \left[g(X,\sigma),g(X,\tau))\right]\, d\sigma,g(X,t)\right]\, d\tau\right\RR.
\end{align}
If $g(x,t)$ is a Hamiltonian vector field with Hamiltonian $H(x,t)$, then all $G_i$'s are Hamiltonian with Hamiltonian $H_i$'s. These Hamiltonians can be computed through the same formulas with Poisson brackets {\em in lieu} of  Lie brackets (see e.g.
\cite{blanes16aci}).
 
\subsection{Dynamics of the Van der Pol system}
As a first and elementary application of previous results, we consider the Van der Pol oscillator, which may be looked at as a  perturbation of the simple harmonic  oscillator:
\begin{eqnarray}\label{eq:vdp}
\qquad &\qquad & \left\{
\begin{array}{rcl} 
\dot q &=& p \\
\dot p &=& -q + \eps (1-q^2) p
\end{array}
\right..
\end{eqnarray}
Clearly, the system is of the form (\ref{eq:autper})
with $u=(q,p)^T$, 
$$
A= \left(
\begin{array}{cc}
0 & 1\\
-1 & 0
\end{array}
\right) \quad \mbox{ and } \quad 
h(u) = \left(
\begin{array}{c}
0 \\
(1-u_1^2)u_2
\end{array}
\right),
$$
and is thus amenable to the  rewriting (\ref{eq:odeT}), where 
$$
g(x,t) := e^{-t A} h(e^{t A} x) \quad \mbox{ and }  \quad e^{t A} =  \left(
\begin{array}{cc}
\cos(t) & \sin(t) \\
-\sin(t) & \cos(t)
\end{array}
\right).
$$
In short, we have 
$$
\dot x = \eps \, \xi_t(x) V_t,
$$
where
$$
V_t = \left(
\begin{array}{r}
-\sin(t)  \\
\cos(t) 
\end{array}
\right) \quad \mbox{ and } \quad \xi_t(x) = \big(1-(\cos(t) x_1 +\sin(t) x_2)^2\Big) (-\sin(t) x_1 + \cos(t) x_2).
$$

\begin{figure}[h]
\begin{center}
\includegraphics[width=63mm,height=63mm]{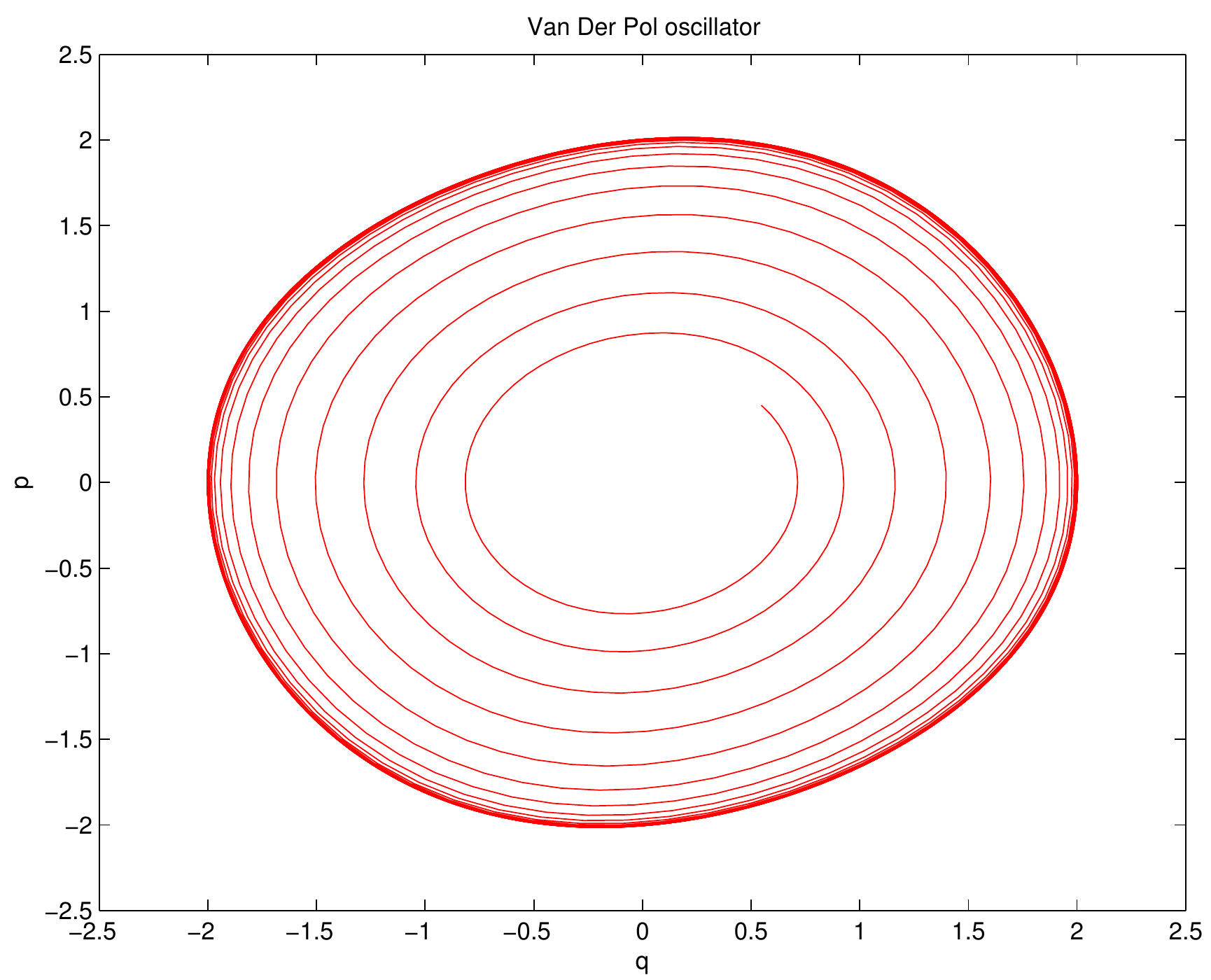} $\quad$
\includegraphics[width=63mm,height=63mm]{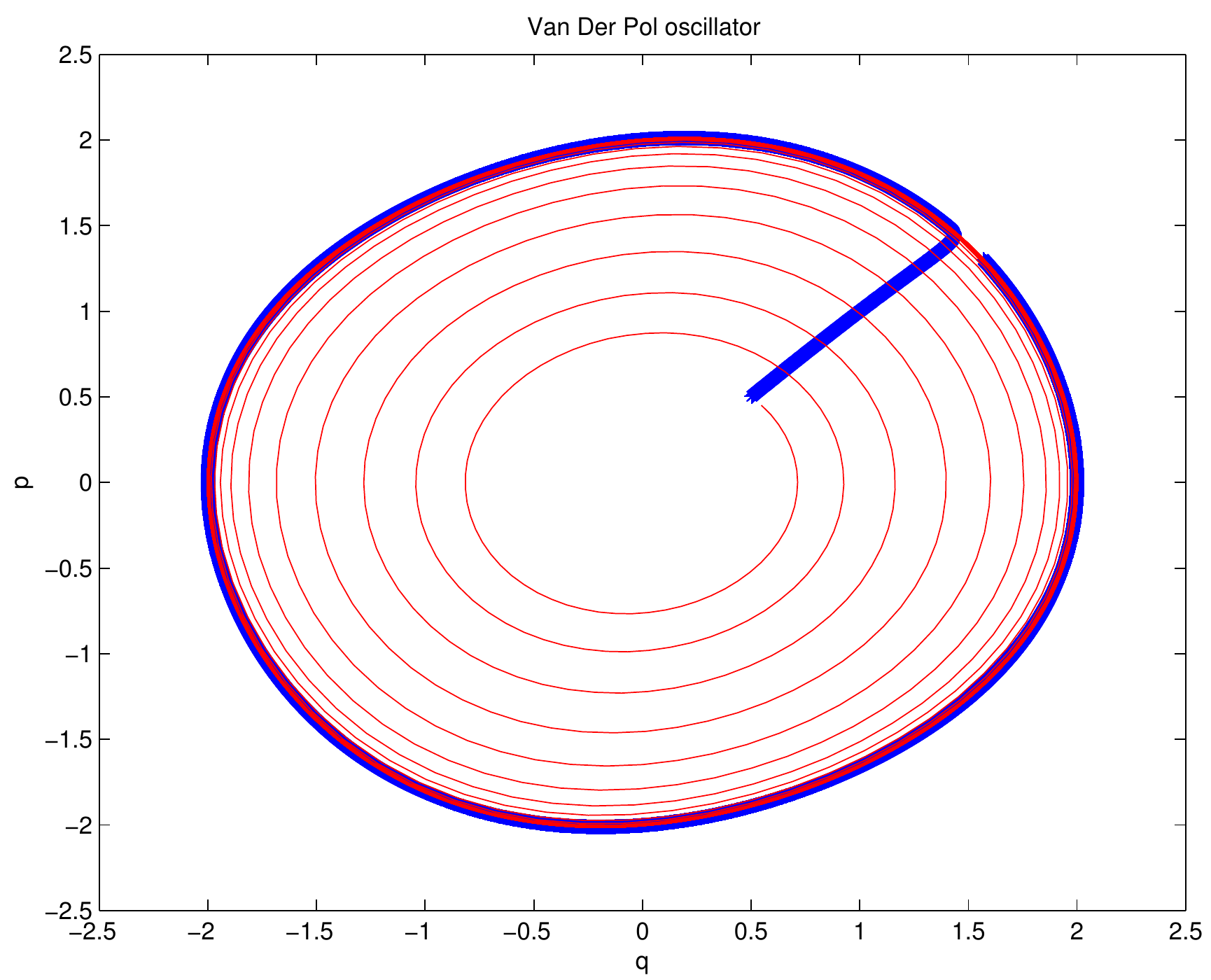}
\caption{Trajectories of the original Van Der Pol system  (in red) and of its averaged version up to second order
(in blue) \label{fig:vdp}}
\end{center}
\end{figure}
Previous formulas give for the first term in the procedure
\begin{align*}
G_1(X) =&\frac{1}{2 \pi} \int_0^{2 \pi} \xi_\tau(X) V_\tau d\tau 
	   =
\left(
\begin{array}{c}
-\frac18 (\|X\|^2_2 - 4) X_1 \\
 -\frac18 (\|X\|^2_2 -4) X_2
 \end{array}
 \right) = -\frac{(\|X\|_2^2-4)}{8} X
\end{align*}
and it is then easy to determine an approximate equation of the limit cycle, i.e.  $\|X\|_2=2$. As for the dynamics of the first-order averaged system, it is essentially governed by the scalar differential equation on $N(X):=\|X\|_2^2$
$$
\frac{d}{dt} N(X) = 2 (X_1 \dot X_1 + X_2 \dot X_2) = -\eps \frac{N (N-4)}{4},
$$
which has two equilibria, namely $\|X\|_2=0$ and $\|X\|_2=2$. The first one is unstable while the second one is stable. However, the graphical representation  (see Figure \ref{fig:vdp}) of the solution of (\ref{eq:vdp}) soon convinces us that the true limit cycle is not a perfect circle. In order to determine a better approximation of this limit cycle, we thus  compute the next term of the average equation from formula (\ref{eq:G2}):
\begin{align*}
G_2(X) =& \frac{-1}{4 \pi} \int_0^{2 \pi} \int_0^t \left( \xi_t (\nabla_X \xi_\tau, V_t) \, V_\tau-\xi_\tau (\nabla_X \xi_t, V_\tau) \, V_t \right) d\tau \\
             =& \left[ \begin {array}{c} -{\frac {1}{256}}\,X_{{2}} \left( 32-24\,{X_
{{2}}}^{2}+5\,{X_{{2}}}^{4}-88\,{X_{{1}}}^{2}+21\,{X_{{1}}}^{4}+10\,{X
_{{1}}}^{2}{X_{{2}}}^{2} \right) \\ \noalign{\medskip}{\frac {1}{256}}
\,X_{{1}} \left( 21\,{X_{{1}}}^{4}+32-88\,{X_{{1}}}^{2}+40\,{X_{{2}}}^
{2}+10\,{X_{{1}}}^{2}{X_{{2}}}^{2}+5\,{X_{{2}}}^{4} \right) 
\end {array} \right] \\
=& - \frac{\eps^2}{256} D(X) J X, 
\end{align*}
where 
$$
D(X) = \left(
\begin{array}{cc}
32-24\,{X_
{{2}}}^{2}+5\,{X_{{2}}}^{4}-88\,{X_{{1}}}^{2}+21\,{X_{{1}}}^{4}+10\,{X
_{{1}}}^{2}{X_{{2}}}^{2} & 0 \\
0 & D_{1,1}(X)+64 X_1^2
\end{array}
\right)
$$
and 
\begin{equation}  \label{defJ}
   J = \left( \begin{array}{rr}
           0  &  1  \\
           -1  &  0
        \end{array}  \right).
\end{equation}           
Now, considering the new quantity $L(X)=N(X)+\eps Q(X)$ with  
$$
Q(X)=\nu X_1 X_2^3,
$$ 
we see from the second-order averaged equation 
$$
\dot X = -\eps \frac{(\|X\|_2^2-4)}{8} X - \frac{\eps^2}{256} D(X) J X, 
$$ 
that
\begin{align*}
\frac{d L}{dt}  &= \frac{d  N}{dt} + \eps (\nabla_X Q, \dot X) \\
& = -\eps \frac{N (N-4)}{4}  -\frac{\eps^2}{128} (X,D(X) J X) - \eps^2 \frac{N-4}{8} (\nabla_X Q,  X) \\
&=  -\eps \frac{L (L-4)}{4} - \eps^2 Q +\frac{\eps^2}{2} Q L + \eps^2\frac{1}{2 \nu} Q - \eps^2\frac12(L-4) Q + {\cal O}(\eps^3) \\
&=  -\eps \frac{L (L-4)}{4} + {\cal O}(\eps^3)
\end{align*}
for $\nu=-\frac12$. A more accurate description of the limit cycle is thus given by the equation 
$$\|X\|_2^2 = 4 +\frac{\eps}{2} X_1 X_2^3.
$$

\subsection{The first two terms of the averaged  nonlinear Schr\"odinger equation (NLS) on the $d$-dimensional torus}
Next, we apply the results of Section \ref{sec.5} to the nonlinear Schr\"odinger equation (for a introduction to the NLS see for instance \cite{carles08sca,cazenave98ait,cazenave03sse})
\begin{align*}
i \pa_t \psi &= -\Delta \psi +\eps k\left(\psi \cdot \bar \psi  \right)\psi,\quad t\geq 0,\quad z \in \T_a^d,
\\
\nonumber
\psi(0,z)&=\psi_0(z)\in E,
\end{align*}
where $\T_a^d=[0,a]^d$ and $E=H^s(\T_a^d)$ is our working space.
We hereafter conform to the hypotheses of \cite{castella15saf} and assume that $h$ is a real-analytic function and that $s>d/2$, ensuring that $E$ is an algebra\footnote{Under all these assumptions, for all initial value $\psi_0 \in E$ and all $\eps >0$, there exits a unique solution $\psi \in C([0,b/\eps[,E)$ for some $b>0$ independent of $\eps$ \cite{castella15saf}.}. 
The operator $-\Delta$ is  self-adjoint non-negative and its spectrum is
\begin{align} \label{eq:spectrum}
\sigma(A)=\left\{\left(\frac{2\pi}{a}\right)^2\ \sum_{j=1}^d l_j^2 \, ; \, l \in \ZZ^d\right\}\subset \left(\frac{2\pi}{a}\right)^2 \NN,
\end{align}
so that by Stone's theorem, the group-operator $\exp(i t \Delta)$ is  periodic with period $T=\frac{a^2}{2 \pi}$.  We may thus rewrite Schrödinger equation as we did with equation (\ref{eq:autper}). However, we shall instead decompose $\psi(t,z)=q(t,z)+ip(t,z)$ in its real and imaginary parts and derive the corresponding canonical system in the new unknown
$$
u(t,\cdot)=\left(\begin{array}{c}p(t,\cdot)\\q(t,\cdot)\end{array}\right) \in H^s(\T^d_a) \times H^s(\T^d_a), 
$$
 that is to say
\begin{align}
\label{eq:y}
\dot u 
=J^{-1} \mbox{Diag}(-\Delta,-\Delta) u+\eps k\left(\|u\|_{\R^2}^2 \right)J^{-1} u,\quad
u(0)=\left(\begin{array}{c} p_0 \\ q_0 \end{array}\right),
\end{align}
where we have denoted $\dot u = \partial_t u$, $\|u\|_{\R^2}^2=(u^1)^2 + (u^2)^2 = (u,u)_{\R^2}$ ($u^1$ and $u^2$ are the two components of $u$), $J$ is given by (\ref{defJ}) and
$$
D=\left(\begin{array}{cc}-\Delta & 0\\ 0 & -\Delta \end{array}\right).
$$
The operator $D$ if self-adjoint on $L^2(\T^d_a) \times L^2(\T^d_a)$ 
and an obvious computation shows
$$
e^{t J^{-1}D}=\left(\begin{array}{cc}\cos(t \Delta)&\sin(t \Delta)\\ -\sin(t \Delta)&\cos(t \Delta)\end{array}\right) := R_t,
$$
so that $e^{t J^{-1}AD}$ is a group of isometries on $L^2(\T^d_a) \times L^2(\T^d_a)$ as well as on $H^s(\T^d_a) \times H^s(\T^d_a)$. Owing to (\ref{eq:spectrum}) it is furthermore periodic (for all $t$,  $R_{t+T}=R_t$), with period $T=\frac{a^2}{2\pi}$.
The very same manipulation as for the prototypical system (\ref{eq:autper}) then leads to 
\begin{align*}
\dot x =\eps g\left(x,t\right),\qquad x=\left(\begin{array}{c} p_0 \\ q_0 \end{array}\right),
\end{align*}
with
\begin{align*}
g(x,t)&:=J^{-1}e^{-t J^{-1} D} k \left( \|e^{t J^{-1}D}x\|_{\R^2}^2\right)e^{t J^{-1}D}x 
= R_{t+T/4} \; k \left( \|R_t x\|_{\R^2}^2\right) \; R_t x.
\end{align*}
Now, it can be verified that 
\begin{align*}
g(x,t)=J^{-1}\nabla_x  H(x,t),
\end{align*}
where 
\begin{align} \label{eq:ham}
H(x,t):=\frac{1}{2}\int_{\T_a^d} K\left(\|(R_t \, x)(t,z)\|_{\R^2}^2\right) dz \quad \mbox{ with } \quad K(r)=\displaystyle \int_0^r k(\sigma)d\sigma.
\end{align}
\begin{remark} \label{rem:ham}
Recall that the gradient is defined w.r.t. the scalar product $(\cdot,\cdot)$ on $L^2(\T^d_a) \times L^2(\T^d_a) $ that we redefine for the convenience of the reader: for all pair of functions $x_1$ and $x_2$ in $L^2(\T^d_a) \times L^2(\T^d_a) $, 
$$
(x_1,x_2) = \int_{\T_a^d} \left(x_1^1(z) x_2^1(z) +x_1^2(z) x_2^2(z) \right) dz = 
\int_{\T_a^d} (x_1(z),x_2(x))_{\R^2} dz.
$$
where $x^1_1$ and $x_1^2$ are the two components of $x_1$ and similarly for $x_2$. Hence, by definition of the gradient, we have that 
$$\forall  (t,x_1,x_2) \in \T\times E^3,\qquad  (\nabla_x H(x_1,t) , x_2)
=\pa_x H(x_1,t) \, x_2.$$
Furthermore, 
$$\forall  (t,x_1,x_2) \in \T\times E^3,\qquad  (R_t \, x_1 , x_2) = (x_1 , R_{-t} \,  x_2)$$
and 
$$\forall (x_1,x_2,x_2)\in E^3\qquad (J\partial_x g(x_1,t)x_2, x_3)=(x_2,J\partial_xg(x_1,t)x_3).$$
Finally, if $\phi_1$ and $\phi_2$ are hamiltonian vector fields, with hamiltonians $\Phi_1$ and $\Phi_2$, then 
$$\left[\phi_1,\phi_2\right]=\pa_X \phi_1 \, \phi_2 -\pa_X \phi_2  \, \phi_1 = J^{-1} \nabla_X \{\Phi_1,\Phi_2\}$$
where the Poisson bracket is defined by 
$$\left\{\Phi_1,\Phi_2\right\}=\left(J\phi_1, \phi_2\right).$$
\end{remark}
Now, the first term of the averaged vector field $G(X,\eps)$ is simply 
$$
G_1(X) = \Big\LL R_{\cdot+T/4} k\left(\|R_\cdot \, X \|^2_{\R^2} \right) \, R_\cdot \, X \Big \RR.
$$
In order to obtain the second term, we use the simple fact that for any $\delta \in H^s(\T_a^d)$ the derivatives w.r.t. $x$ in the direction $\delta$ may be computed as
\begin{align*}
\partial_x \left( k\left(\|R_t \, x\|^2_{\R^2}\right)\right) \cdot \delta &=k'\left(\|R_t \, x\|^2_{\R^2}\right) \, \left(\partial_x \|R_t \, x\|^2_{\R^2}\right) \cdot \delta  \\
&= 
2 k'\left(\|R_t \, x\|^2_{\R^2}\right) \, (R_t \, x,R_t \, \delta)_{\R^2}  
\end{align*}
so that 
\begin{align*}
\partial_x \left( g(x,t) \right) \cdot \delta =& R_{t+T/4} \, k\left(\|R_t \,  x\|^2_{\R^2}\right) R_t \, \delta \\
&+
2 R_{t+T/4} \, k'\left(\|R_t \, x\|^2_{\R^2}\right) \,  (R_t \, x,R_t \, \delta)_{\R^2}  \; R_t \,  x. 
\end{align*}
Inserted in the expression of $G_2$ we thus obtain the following expression for the $\eps^2$-term of the averaged equation
\begin{align*}
G_2(X) &= -\frac{1}{2} \left\LL \int_0^{t} \left[g(X,\tau), g(X,t)\right]\, d\tau \right\RR =I_1(X)+I_2(X)
\end{align*}
with 
\begin{align*}
I_1 (X)=& -\frac{1}{2} \left\LL \int_0^{t} \, R_{\tau +T/4} \, k\left(\|R_\tau x\|^2_{\R^2}\right) R_{\tau+t+T/4}   \, k\left(\|R_t x\|^2_{\R^2}\right) R_t \, x   d\tau \right\RR \\
&+\frac{1}{2} \left\LL \int_0^{t} \, R_{t +T/4} \, k\left(\|R_t x\|^2_{\R^2}\right) R_{\tau+t+T/4}   \, k\left(\|R_\tau x\|^2_{\R^2}\right) R_\tau \, x   d\tau \right\RR 
\end{align*}
and 
\begin{align*}
I_2 (X)=& - \left\LL \int_0^{t} \, R_{\tau +T/4} \, k'\left(\|R_\tau x\|^2_{\R^2}\right) \,  (R_\tau \, x,R_{\tau+t+T/4} k\left(\|R_t x\|^2_{\R^2}\right) R_t x)_{\R^2}  \; R_\tau \, x   d\tau \right\RR \\
&+\left\LL \int_0^{t} \, R_{t +T/4} \, k'\left(\|R_t x\|^2_{\R^2}\right) \,  (R_t \, x,R_{\tau+t+T/4} k\left(\|R_\tau x\|^2_{\R^2}\right) R_\tau x)_{\R^2}  \; R_t \, x   d\tau \right\RR.
\end{align*}
As already mentioned, both $G_1$ and $G_2$ are Hamiltonian with Hamiltonian $H_1$ and $H_2$ which could have been equivalently computed from $H(x,t)$ in (\ref{eq:ham}) (see Remark \ref{rem:ham}).


\section*{Acknowledgements}
The work of FC and AM has been supported by Ministerio de Econom{\'i}a y Com\-pe\-ti\-ti\-vi\-dad (Spain) through project MTM2016-77660-P (AEI/FEDER, UE). AM is also supported by the consolidated group IT1294-19 of the Basque Government. 
PC acknowledges funding by INRIA through its Sabbatical program and thanks the University of the Basque Country for its hospitality.

\bibliographystyle{siam}

\end{document}